\newtheorem{lemma}{Lemma}
\newtheorem{theorem}[lemma]{Theorem}
\DeclareMathOperator{\tr}{Tr}
\newcommand{\eps}{\varepsilon}
\newcommand*{\C}{\mathbb{C}}
\newcommand{\norm}[1]{\left\| #1 \right\|}
\newcommand{\bra}[2]{\mathinner{\langle #2|}_{#1}}
\newcommand{\ket}[2]{\mathinner{|#2\rangle}_{\hspace{-0.1em} #1}}
\def\bra #1{\langle #1\vert}
\def\ket #1{\vert #1\rangle}
\def\braket #1#2{\langle #1 \vert #2\rangle}
\newcommand\Scp[2]{\ensuremath{\, \langle #1 \,\vert #2 \,\rangle}}
\begin{document}

\title{Thermal States as Convex Combinations of Matrix Product States} 

\author{Mario Berta}
\affiliation{Department of Computing, Imperial College London, London, United Kingdom}

\author{Fernando G.~S.~L.~Brand\~ao}
\affiliation{Institute for Quantum Information and Matter, California Institute of Technology, Pasadena, California, USA}

\author{Jutho Haegeman}
\affiliation{Department of Physics and Astronomy, Ghent University, Ghent, Belgium}

\author{Volkher B.~Scholz}
\affiliation{Institute for Theoretical Physics, ETH Zurich, Z\"urich, Switzerland}
\affiliation{Department of Physics and Astronomy, Ghent University, Ghent, Belgium}

\author{Frank Verstraete}
\affiliation{Department of Physics and Astronomy, Ghent University, Ghent, Belgium}
\affiliation{Department of Physics, University of Vienna, Vienna, Austria}

\begin{abstract}
We study thermal states of strongly interacting quantum spin chains and prove that those can be represented in terms of convex combinations of matrix product states. Apart from revealing new features of the entanglement structure of Gibbs states our results provide a theoretical justification for the use of White's algorithm of minimally entangled typical thermal states. Furthermore, we shed new light on time dependent matrix product state algorithms which yield hydrodynamical descriptions of the underlying dynamics. 
\end{abstract}

\maketitle

%
\noindent\emph{Introduction}---%
The theory of entanglement and tensor networks have provided a novel language for describing strongly correlated quantum many body systems. This has led to a deeper understanding of the properties of topological phases of matter \cite{perez2008string,pollmann2010entanglement,chen2011classification,schuch2011classifying}, to novel computational algorithms for simulating such spin systems \cite{Verstraete_2008,schollwock2011density} and to rigorous proofs that ground states of gapped one-dimensional quantum spin systems can be represented \cite{Verstraete:2006ir,Hastings_2007} and simulated \cite{landau2013polynomial} using matrix product states as those ground states satisfy the area law for the entanglement theory. 

In this paper, we are concerned with representing thermal states of quantum spin systems. It has been proven that such Gibbs states satisfy the area law for the mutual information \cite{Wolf08}, that matrix product operators \cite{Verstraete_2004,zwolak2004mixed,Pirvu_2010} provide a faithful approximation \cite{Hastings_2006,molnar2015approximating} to Gibbs states, and efficient algorithms for finding those operators have been formulated \cite{Verstraete_2004,karrasch2012finite,Barthel}. Conceptually, those algorithms suffer from a major drawback: no distinction is made between the "classical" and "quantum" correlations. Classical correlations should be dealt with by using Monte Carlo sampling techniques, and one should not waste a large "bond dimension" to those fluctuations. Furthermore, all algorithms dealing with matrix product operators either deal with purifications, which can potentially lead to a huge increase in bond dimension \cite{de2013purifications}, or cannot assure positivity of the matrix product density operator. Those algorithms also become inefficient for low temperatures, as when working with matrix product operator descriptions of pure states the bond dimension is squared. Those problems can be cured by invoking mixtures of pure matrix product states, and this is the main topic of this paper. 

Our main result states that thermal states of one-dimensional local Hamiltonians can be approximately written as such as convex combination of Matrix product states, all of which have a bounded bond dimension. Unlike the case of ground states~\cite{Verstraete:2006ir,Hastings_2007}, the Hamiltonians do not need to be gapped --- we only require a uniform bound on the interaction strength. The proof relies on recent results by one of the authors concerning the Markov structure of Gibbs states~\cite{kato16}. We illustrate this by providing arguments for using Matrix product states as subroutines in algorithms dealing with thermal states of quantum spin systems: first, the METTS algorithm of White~\cite{White_2009,Stoudenmire_2010} yielding an approximation of Gibbs states using DMRG techniques, and second, for the quantum thermalization algorithm of Leviatan {\it et al.}~revealing hydrodynamical properties of quantum spin chains~\cite{2017arXiv170208894L}.

Our paper is organized as follows. We first review the definition of Matrix product states and state of main result. After sketching the proof (the details can be found in the appendices), we study the two mentioned numerical algorithms, starting with the static case which is followed by the dynamical case. We end by discussing further applications of our results. 

%
\noindent\emph{Matrix Product states}---%
A prominent example of quantum states representable by a tensor network are Matrix product states (MPS)~\cite{fannesnachtergaelewerner1992,Verstraete:2006ir,Wolf:2007wt}. They form a sub-manifold~\cite{Haegeman_2014} $M_{MPS}^D \subset \left(\C^d\right)^{\otimes n}$ of the state space of a one-dimensional quantum lattice system with $n$ sites and a finite $d$-dimensional local Hilbert space on every site
\begin{align}\label{eq:mpsdef}
\ket{\psi[A_{i_1},\ldots, A_{i_n}]} = \sum_{i_1,\ldots,i_n} \tr[A_{i_1} \cdots A_{i_n}] \,\ket{i_1}\ldots\ket{i_n},
\end{align}
where here and henceforth we assume periodic boundary conditions~\footnote{Our main result (Thm.~\ref{thm:main}) can be extended to open boundary conditions rather trivially but we choose to work with periodic boundary condition to keep the notation simple.}, and the $A_{i_j}$, $j = 1,\ldots, n$ are $D \times D$ dimensional matrices. The parameter $D$ is called the bond dimension and models the amount of entanglement in the state. Matrix product states with low bond dimension satisfy the area law of entanglement and have been proven to capture the ground state physics of one-dimensional local gapped Hamiltonians\,---\,exactly due to the rapid decay of the entanglement spectrum in these systems~\cite{Hastings_2007}. In addition, MPS allow for an efficient computation of expectation values, independently of the system size. Various algorithms exists which find the best approximate state within the sub-manifold $M_{MPS}^D$ for the ground state of a local Hamiltonian, either variationally using the density matrix renormalization group (DMRG)~\cite{RevModPhys.77.259} or by simulating imaginary time evolution using time-evolving block decimation (TEBD)~\cite{PhysRevLett.93.040502} or the time-dependent variational principle (TDVP)~\cite{PhysRevLett.107.070601}. In this letter, we are interested in mixed states which can be written as convex combinations of Matrix product states of a fixed bond dimension,
\begin{align}\label{eq:convmpsdef}
\rho[\mu] = \int d\mu(A_{i_1},\ldots) \ket{\psi[A_{i_1},\ldots]}\bra{\psi[A_{i_1},\ldots]}.
\end{align}
Here $\mu(A_{i_1},\ldots)$ denotes some probability measure on the manifold $M_{MPS}^D$ and is otherwise arbitrary. We note that if we can efficiently sample from this distribution, then we can also efficiently compute expectation values of local observables, as in the case of Matrix product states.

%
\noindent\emph{Main Result}---%
Sharing many properties with the class of pure Matrix product states which model the ground state physics of gapped local Hamiltonians, we expect that convex combinations of Matrix product states also possess physical significance. And indeed we find that they approximate thermal states
\begin{align}
\rho_{H,T} = \frac{1}{Z}\,\exp(-H/T)
\end{align}
of local Hamiltonians $H = \sum_{i} h_i$, where $Z = \tr[\exp(-H/T)]$ is the partition function ($T$:~temperature). Here local means that each term $h_i$ only acts on a finite number of neighbouring sites. In addition, we assume that the interaction terms all satisfy a unique upper bound on their interaction strength, $\norm{h_i}_{\infty} \leq C$, where $\norm{.}_\infty$ denotes the operator norm.

\begin{theorem}\label{thm:main}
Let $H$ be a local one-dimensional Hamiltonian such that it interaction terms possess a uniform upper bound on their interaction strength. Then, for any temperature $T$ and any $\eps>0$ there exists a bond dimension $D$ and a probability distribution $\mu_\eps$ on the manifold $M_{MPS}^D$ of Matrix product states with bond dimension $D$ such that the associated convex combination of Matrix product states $\rho[\mu]$ is $\eps$-close to the thermal state at temperature $T$:
\begin{align}\label{eq:result}
\norm{\rho_{H,T}-\rho[\mu_\eps]}_{1}\leq\eps.
\end{align}
Here, the $\norm{.}_1$ denotes the trace-norm. The bond dimension $D$ scales quasi-polynomially in the system size and $\eps^{-1}$, and doubly exponential in the inverse temperature $T^{-1}$.
\end{theorem}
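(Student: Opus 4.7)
The plan is to combine the approximate conditional Markov property of one-dimensional Gibbs states established in \cite{kato16} with the elementary observation that any pure state on $r$ sites is trivially an MPS of bond dimension at most $d^{\lceil r/2\rceil}$. Partition the $n$-site chain into alternating block regions $A_1,\ldots,A_M$ of size $\ell$ and buffer regions $B_1,\ldots,B_{M-1}$ of size $k$, with $M\approx n/(\ell+k)$. The target is a convex combination over tensor products of block-wise pure states that factorise across every buffer.

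The first step is to invoke Kato's bound on every triple $(L_i,B_i,R_i)$ with $L_i=A_1B_1\cdots A_i$ and $R_i=A_{i+1}\cdots A_M$: the conditional mutual information $I(L_i:R_i|B_i)_{\rho_{H,T}}$ is bounded by $\delta(k,T)$, a function that decays rapidly in $k$ with a correlation-length-like scale $\xi(T)$ which itself grows doubly exponentially in $T^{-1}$. By the Fawzi--Renner inequality and its quantum refinements, small conditional mutual information across $B_i$ guarantees the existence of a local instrument on $B_i$ which outputs a classical label $m_i$ such that, conditional on $m_i$, the post-measurement state factorises into a tensor product of a state on the left portion of the chain and a state on the right portion, up to a trace-norm error $O(\sqrt{\delta(k,T)})$ averaged over $m_i$. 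This instrument is essentially the approximate Koashi--Imoto decomposition of $B_i$.

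The second step is to apply all buffer instruments simultaneously and to control the accumulated error via telescoping, using the triangle inequality together with contractivity of the trace norm under quantum channels. This yields an approximate decomposition
\[
\rho_{H,T}\;\approx\;\sum_{\mathbf{m}}p(\mathbf{m})\bigotimes_{i=1}^{M}\sigma_{\mathbf{m}}^{(i)},
\]
with global trace-norm error $O(M\sqrt{\delta(k,T)})$, where $\mathbf{m}=(m_1,\ldots,m_{M-1})$ collects the buffer outcomes and each $\sigma_{\mathbf{m}}^{(i)}$ is a mixed state on at most $\ell+k$ sites. Spectrally decomposing $\sigma_{\mathbf{m}}^{(i)}=\sum_{\alpha}q_{\mathbf{m},\alpha}^{(i)}\ketbra{\phi_{\mathbf{m},\alpha}^{(i)}}{\phi_{\mathbf{m},\alpha}^{(i)}}$ and distributing the tensor product over the internal sums rewrites the right-hand side as a genuine convex combination of product pure states, each of which is an MPS of bond dimension $D\leq d^{\lceil(\ell+k)/2\rceil}$: bond dimension $1$ between blocks and at most $d^{\lceil(\ell+k)/2\rceil}$ within each block. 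Absorbing both the classical labels $\mathbf{m}$ and the spectral labels $\alpha$ into a single probability measure $\mu_\eps$ on $M_{MPS}^D$ yields the desired $\rho[\mu_\eps]$.

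The final step is parameter optimisation. Choosing $k=\xi(T)\cdot\mathrm{polylog}(n/\eps)$ makes $M\sqrt{\delta(k,T)}\leq\eps$, and taking $\ell\sim k$ then gives $D\leq d^{O(\ell+k)}$, which is quasi-polynomial in $n$ and $\eps^{-1}$ and doubly exponential in $T^{-1}$, as advertised. The main technical obstacle is establishing the stability of the approximate Koashi--Imoto decomposition under simultaneous application to many buffers: one must verify that the recovery channels compose so that the error remains additive rather than multiplicative, and that the joint instrument on the union of all buffers induces a well-defined probability distribution over pure-MPS outcomes. All other ingredients reduce to spectral decomposition and MPS bookkeeping within bounded blocks.
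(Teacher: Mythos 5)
There is a genuine gap, and it sits exactly where your argument does its real work: the claim that small conditional mutual information across a buffer $B_i$ yields a local instrument on $B_i$ whose classical outcomes $m_i$ leave the state approximately factorized into (left)$\,\otimes\,$(right) with average trace-norm error $O(\sqrt{\delta})$. This is an \emph{approximate} Hayden--Jozsa--Petz--Winter / Koashi--Imoto structure theorem, and it does not follow from the Fawzi--Renner inequality or its refinements: those results deliver a recovery \emph{channel} $\Lambda_{\beta\to\beta\gamma}$ reconstructing $\rho_{\alpha\beta\gamma}$ from $\rho_{\alpha\beta}$, which is a strictly weaker and structurally different conclusion than a measurement on $\beta$ that conditionally decouples the two sides. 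The exact structure theorem (direct-sum-of-tensor-products decomposition of $\mathcal{H}_B$) is known only at $I(\alpha:\gamma|\beta)=0$, and no dimension-independent robust version is available --- closeness to states with the exact Markov-chain structure is \emph{not} controlled by the conditional mutual information, which is precisely why the recoverability formulation exists in the first place. Since your entire decomposition $\rho_{H,T}\approx\sum_{\mathbf m}p(\mathbf m)\bigotimes_i\sigma^{(i)}_{\mathbf m}$ (and hence the bond-dimension-$1$ structure between blocks) rests on this unproven per-buffer statement, the proof does not go through as written; the composition/additivity issue you flag at the end is real but secondary by comparison. (A smaller point: what~\cite{kato16} supplies, and what the paper leans on, is the recovery-map statement with error $\exp(-q(T)\sqrt{\ell})$; an exponential decay of the conditional mutual information for general 1D Gibbs states is not what is used here.)

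The paper's proof avoids the structure theorem entirely by using the recovery channels themselves: the lattice is split into blocks $A_iB_iC_i$, one purifies the Gibbs marginals on each $A_iB_i$, and then applies the Stinespring dilations $V^i$ of the Markov recovery maps $\Lambda^i_{B_iA_{i+1}\to B_iA_{i+1}C_i}$ to generate the $C_i$ regions, producing a single global pure state which is an MPS of bond dimension $2^{O(l(1+\xi))}$ (small Stinespring dilations imply small Schmidt rank). Tracing out the purifying and dilation registers in a local basis and invoking SLOCC monotonicity of the Schmidt rank then exhibits the reduced state as a convex combination of MPS of the same bond dimension, and the error analysis combines the recovery bound with a telescoping argument, Araki's exponential decay of correlations (to justify replacing true marginals on $A_1B_1C_1\cdots A_iB_i$ by products with $\rho_{A_iB_i}$), and a non-lockability bound. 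If you want to salvage your route, you would either have to prove the conditional-factorization statement for 1D Gibbs states by some model-specific argument (e.g.\ through the belief-propagation decomposition of $e^{-H/T}$), or retreat to the recovery-map formulation --- at which point you are essentially reconstructing the paper's proof.
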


Note that such an approximation is trivial for a bond dimension $D$ scaling exponentially in system size, whereas we show quasi-polynomial scaling. We emphasise that a bound on the trace-norm is in particular sufficient to guarantee a bound on the error in observables\,---\,but probably not required. Hence, numerical simulations might observe a faster convergence, especially in local observables. In the following, we present the main ideas which go into the proof of Thm~\ref{thm:main}, and refer the reader interested in the exact quantitative bounds and mathematical details to the appendices.

To start with, we mention that the entanglement structure of Gibbs states with finite correlation length is known to fulfil an area law for the correlation measure quantum mutual information~\cite{Wolf08}
\begin{align}
I(A:B)_\rho:=H(A)_\rho+H(B)_\rho-H(AB)_\rho,
\end{align}
where $H(A)_\rho:=-\mathrm{Tr}\left[\rho_A\log\rho_A\right]$ denotes the von Neumann entropy. Now, to prove Thm.~\ref{thm:main} it would be sufficient to extend that to an area law for another correlation measure: the so-called entanglement of formation~\cite{Bennett96}
\begin{align}
E_F(A:B)_\rho:=\inf\sum_i p_iH(A)_{\rho^i},
\end{align}
where the infimum is over all pure state decompositions $\rho_{AB}=\sum_ip_i\ket{\rho^i}\bra{\rho^i}_{AB}$. However, since the mutual information is neither a convex nor a concave function in the state it generally behaves very differently than the entanglement of formation. In particular, there exist quantum states with $E_F(A:B)_\rho \gg I(A:B)_\rho$~\cite{hayden06} and hence this argument cannot be used to prove the desired statement.


Another way to proceed for $L=\alpha\beta\gamma$ would be to focus on the quantum conditional mutual information, $I(\alpha:\gamma|\beta)_\rho = H(\alpha\beta)_\rho + H(\beta\gamma)_\rho - H(\beta)_\rho - H(\alpha\beta\gamma)_\rho$, where $\beta$ connects the regions $\alpha$ and $\beta$ (see Fig.~\ref{fig:Markov_chain}). If now $I(\alpha:\gamma|\beta)_\rho$ would be decaying exponentially in the size of $\beta$, then recent results in quantum information theory~\cite{Fawzi2015,arXiv:1509.07127,Sutter2017} show that we could recover the state on $\alpha\beta\gamma$ from the one on just $\alpha\beta$ using a completely positive trace-preserving map acting only on $\beta$ and having only a few Kraus operators. As already noted in~\cite{PhysRevB.94.155125,2016arXiv160907877B}, this would allow us to repeat that argument and obtain a representation of the global state using a sequence of quasi-local maps --- also called a local Markov chain structure.

This in turn will be the entry point of our arguments, as discussed in the following. Unfortunately, while we expect that for many interesting physical systems the quantum conditional mutual information behaves as expected, a general statement for thermal states of local Hamiltonians is not known. However, the local Markov chain structure of Gibbs states of one-dimensional local Hamiltonians still holds approximately~\cite{kato16}.

\begin{figure}[t]
\begin{overpic}[width=0.5\textwidth]{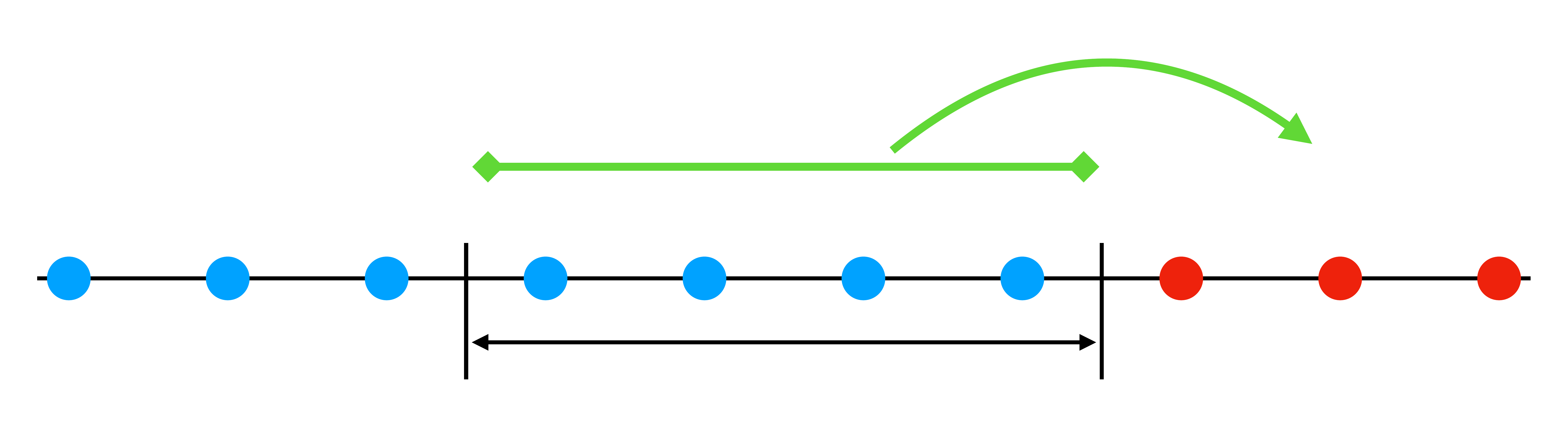}
\put(13.5,13.5){$\alpha$}
\put(49,13.5){$\beta$}
\put(84,13.5){$\gamma$}
\put(66,19.5){$\Lambda_{\beta\to\beta\gamma}$}
\put(45,2){$d(\alpha,\gamma)$}
\end{overpic}
\caption{Markov chain structure of Gibbs states: for every tripartite split of the lattice $\alpha\beta\gamma$, there exists a local quantum channel $\Lambda_{\beta\to\beta\gamma}$ only acting on the region $\beta$ and approximately recovering the global Gibbs state from the reduced state on the region $\alpha\beta$.}
\label{fig:Markov_chain}
\end{figure}

\begin{lemma}[Markov chain structure of Gibbs states]\label{lem:markov}
Let $H$ be a local one-dimensional Hamiltonian. Then, for any tripartite split $L:=\alpha\beta\gamma$ as depicted in Fig.~\ref{fig:Markov_chain}, there exists a local quantum channel $\Lambda_{\beta\to\beta\gamma}$ such that
\begin{align}\label{eq:markov}
\left\|\rho_{H,T}-(\mathcal{I}_\alpha\otimes\Lambda_{\beta\to\beta\gamma})(\rho_{\alpha\beta})\right\|_1\leq\exp\left(-q(T)\sqrt{d(\alpha,\gamma)}\right),
\end{align}
where $d(\alpha,\gamma)$ measures the distance in system size between the regions $\alpha$ and $\gamma$, and $q(T)$ denotes some temperature dependent constant.
\end{lemma}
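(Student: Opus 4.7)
The plan is to construct $\Lambda_{\beta\to\beta\gamma}$ as the Petz recovery map associated to the restriction of $\rho:=\rho_{H,T}$ to $\beta\gamma$, and then to reduce the estimate to a decay bound on the conditional mutual information $I(\alpha\!:\!\gamma|\beta)_\rho$. Concretely, set
\begin{align*}
\Lambda_{\beta\to\beta\gamma}(X_\beta):=\rho_{\beta\gamma}^{1/2}\bigl(\rho_\beta^{-1/2}X_\beta\rho_\beta^{-1/2}\otimes\id_\gamma\bigr)\rho_{\beta\gamma}^{1/2}.
\end{align*}
This map is completely positive and trace-preserving, inputs on $\beta$ and outputs on $\beta\gamma$ as required, and returns $\rho_{\alpha\beta\gamma}$ exactly whenever $\rho$ is a true quantum Markov chain on $\alpha\beta\gamma$. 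It is therefore the natural ansatz.

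For the first reduction I would invoke the Fawzi--Renner inequality, as sharpened in~\cite{Sutter2017}, which yields
\begin{align*}
\norm{\rho-(\cI_\alpha\otimes\Lambda_{\beta\to\beta\gamma})(\rho_{\alpha\beta})}_1^2\,\lesssim\,I(\alpha\!:\!\gamma|\beta)_\rho.
\end{align*}
Hence it suffices to establish $I(\alpha\!:\!\gamma|\beta)_\rho\leq\exp(-2q(T)\sqrt{d(\alpha,\gamma)})$ for some temperature-dependent constant $q(T)>0$.

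The decay of the conditional mutual information is the heart of the matter and where one-dimensionality is used. Split the chain through the middle of $\beta$ so that $H=H_L+H_R+V_\partial$, with $H_L$ supported on $\alpha$ and the left half of $\beta$, $H_R$ on the right half of $\beta$ and $\gamma$, and $\norm{V_\partial}_\infty\leq 2C$ by the uniform bound on interaction strengths. Expand $e^{-H/T}$ in a Dyson series around the decoupled operator $e^{-(H_L+H_R)/T}$ and truncate at order $k$. Each retained term has support growing linearly with $k$, so it does not connect $\alpha$ to $\gamma$ unless $k\gtrsim d(\alpha,\gamma)$; the tail of the Dyson series is controlled by Araki's theorem on the analyticity of one-dimensional thermal states on an imaginary-time strip of width set by $T$. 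Optimising the truncation order against the analyticity-strip bound produces the sub-exponential decay $\exp(-q(T)\sqrt{d(\alpha,\gamma)})$. The main obstacle is precisely this last optimisation: without any spectral gap, no genuine exponential decay is available at finite temperature, and the $\sqrt{d}$ exponent is the best one can extract from Araki's analyticity domain. This delicate estimate is the main technical contribution of~\cite{kato16}, which I would cite rather than redo.
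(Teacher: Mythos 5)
There is a genuine gap: your reduction hinges on the subexponential decay of the conditional mutual information $I(\alpha\!:\!\gamma|\beta)_\rho$ for one-dimensional Gibbs states, but this is precisely the statement that is \emph{not} available. The main text is explicit about this: while such decay is expected (and conjectured in~\cite{PhysRevB.94.155125,2016arXiv160907877B}), no general bound for thermal states of local Hamiltonians is known, which is exactly why Lem.~\ref{lem:markov} is not derived via the Fawzi--Renner route. You also misattribute the content of~\cite{kato16}: its main technical contribution is not a bound on $I(\alpha\!:\!\gamma|\beta)_\rho$, but a direct construction of the recovery channel. That construction uses Hastings' quantum belief propagation~\cite{Hastings07} to factorize $\exp(-H/T)$ across the middle of $\beta$ up to conjugation by a local operator $P_{\beta,t}$ with $\|P_{\beta,t}\|_\infty\leq 2^{O(1/T)}$, builds from it a completely positive but trace \emph{non-increasing} recovery map $K_{\beta\to\beta\gamma}$ accurate to $2^{-O(t)}$, and then upgrades it to a trace-preserving channel by a repeat-until-success argument that exploits Araki's exponential clustering~\cite{araki69}; it is this probabilistic iteration that degrades the error from $2^{-O(t)}$ to $2^{-O(\sqrt{t})}$ and produces the $\sqrt{d(\alpha,\gamma)}$ in Eq.~\eqref{eq:markov}. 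Your Dyson-series-plus-analyticity sketch for the CMI decay, with the $\sqrt{d}$ attributed to an optimization over the truncation order, does not correspond to any proof in~\cite{kato16} and cannot simply be cited away.

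A secondary issue: even if the CMI decay were granted, the universal recovery bounds of~\cite{Fawzi2015,arXiv:1509.07127,Sutter2017} guarantee the inequality $\norm{\rho-(\cI_\alpha\otimes\Lambda)(\rho_{\alpha\beta})}_1^2\lesssim I(\alpha\!:\!\gamma|\beta)_\rho$ only for a rotated (or suitably averaged) Petz map, not for the plain Petz map you write down; it is not known that your specific ansatz satisfies this bound, so you would have to replace it by the universal map of~\cite{Sutter2017} (which still acts only on $\beta$, so locality is preserved). This substitution is easy, but it does not repair the first gap, which concerns the unproven decay of the conditional mutual information itself.
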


As shown in~\cite{kato16} the intuition behind the recovery channel in Lem.~\ref{lem:markov} is as follows. First one uses the quantum belief propagation equations from~\cite{Hastings07} to obtain the following approximate decomposition of the Gibbs state
\begin{align}
&\exp\left(-H_{\alpha\beta\gamma}/T\right)\notag\\
&\approx P_{\beta,t}\Big(\exp\left(-H_{\alpha\beta_{L}}/T\right)\otimes\exp\left(-H_{\beta_{R}\gamma}/T\right)\Big)P_{\beta,t}^{\cal y}\,,\label{eq:fernando_intuition}
\end{align}
where $\beta_L\beta_R$ are regions each composed of half the qubits of $\beta$ and $P_{\beta,t}$ is a local operator of size $t:=d(\alpha,\gamma)/2$ centred in the cut between $\beta_L$ and $\beta_R$ with $\left\|P_{\beta,t}\right\|_\infty\leq2^{O(1/T)}$. The approximation error in Eq.~\eqref{eq:fernando_intuition} is exponentially small in $t$. From this decomposition one can define the completely positive and trace non-increasing map
\begin{align}
&K_{\beta\to\beta\gamma}(\cdot)\notag\\
&:=P_{\beta,t}\left(\mathrm{Tr}_{\beta_R}\left[P_{\beta,t}^{-1}(\cdot)\left(P_{\beta,t}^{-1}\right)^{\cal y}\right]\otimes\rho_{H_{\beta_R\gamma}}\right)P_{\beta,t}^{\cal y}\,,
\end{align}
and show that $\rho_{\alpha\beta\gamma}\approx\left(\mathcal{I}_A\otimes K_{\beta\to\beta\gamma}\right)(\rho_{\alpha\beta})$ up to an error $2^{-O(t)}$. This almost achieves our goal, however, the map $K_{B\to BC}$ is not trace preserving. To make it trace preserving one uses a repeat-until-success strategy by probabilistically implementing $K_{\beta\to\beta\gamma}$: if it succeeds the state is recovered and if it fails one traces out the region it was applied to and a shield region just next to it. Now since the Gibbs state has an exponential decay of correlations~\cite{araki69} and since the map is applied with probability $2^{-O(1/T)}$ independent of the system size one obtains a good approximation of the reduced state of the Gibbs state (on a slightly smaller length). One can then repeat the procedure until it is successful and this increases the total error to $2^{-O(\sqrt{t})}$ (see~\cite{kato16} for more details).

With the help of Lem.~\ref{lem:markov} we are now ready to give a proof of our main result.

\begin{proof}[Proof of Thm.~\ref{thm:main}]
Our argument is based on three steps:
\begin{enumerate}[(i)]
\item Using Lem.~\ref{lem:markov} we construct a Matrix product state $\ket{\Psi_{D,\eps}}$ with bond dimension $D$ quasi-polynomial in the system size $n$ and $\eps^{-1}$.
\item We show that $\ket{\Psi_{D,\eps}}$ is the purification of a convex combination of Matrix product states with bond dimension $D$ as in step (i) -- denoted by $\rho[\mu_\eps]$.
\item We show that $\rho[\mu_\eps]$ is close to the Gibbs state $\rho_{H,T}$.
\end{enumerate}
For step (i) we split the lattice into three consecutive regions (as depicted in Fig.~\ref{fig:main_result})
\begin{align}
L:=A_1B_1C_1\;A_2B_2C_2\;\cdots\;A_IB_IC_I
\end{align}
of dimension $|A_i|=|B_i|=2^l$ and $|C_i|=2^{l\cdot5\xi}$ where $\xi\leq\exp(c/T)$ denotes the correlation length of the Gibbs state~\cite{araki69}. We then prepare a purification $\ket{\rho^i}_{\bar{A}_iA_i\bar{B}_iB_i}$ of all the reduced states on $A_iB_i$ of the Gibbs state $\rho_{H,T}$ and fill in the missing $C_i$ pieces together with their purifications $\bar{C}_i$ by making use of the Markov chain structure of the Gibbs state (Lem.~\ref{lem:markov}). In more detail, we apply Lem.~\ref{lem:markov} to the full lattice $L$ with the decomposition
\begin{align}
\text{$\gamma_i:=C_i$, $\beta_i:=B_iA_{i+1}$, and $\alpha_i:=L/\left(\beta_i\gamma_i\right)$,}
\end{align}
leading to quantum channels $\Lambda^i_{B_iA_{i+1}\to B_iA_{i+1}C_i}$ with the approximation property as in Eq.~\eqref{eq:markov}. We then apply all the corresponding (minimal) Stinespring dilations of these channels leading to
\begin{align}
\ket{\Psi_{D,\eps}}:=\bigotimes_{i=1}^IV^i_{B_iA_{i+1}\to B_iA_{i+1}C_i\hat{B}_i\hat{A}_{i+1}\hat{C}_i}\ket{\rho^i}_{\bar{A}_iA_i\bar{B}_iB_i}\,,
\end{align}
where $|\hat{B}_i|=|B_i|^2$, $|\hat{A}_{i+1}|=|A_{i+1}|^2$, and $|\hat{C}_i|=|C_i|$. It is straightforward to check that the resulting global pure state $\ket{\Psi_{D,\eps}}$ becomes a Matrix product state with bound dimension upper bounded by $D\leq2^{l(8+10\xi)}$ and choosing $l=\log^2\left(n/\varepsilon\right)$ establishes step (i).

For step (ii) we arrive at $\rho[\mu_\eps]$ from $\ket{\Psi_{D,\eps}}$ by tracing out the purifying registers $\bar{A}_{i+1}\bar{B}_i$ as well as the Stinespring registers $\hat{B}_i\hat{A}_{i+1}\hat{C}_i$. By the monotonicity of the Schmidt rank under stochastic local operations assisted by classical communication (SLOCC)~\cite{vidal99} this exactly creates a convex combination of matrix product states with bond dimension upper bounded by $D$ from step (i).

Finally, step (iii) is deduced from the approximate quantum Markov structure as in Eq.~\eqref{eq:markov} together with a telescoping sum argument as well as some non-lockability bound from~\cite[Lem.~20]{horodecki14}. We refer to the appendices for a derivation of the exact error bounds.
\end{proof}

\begin{figure}[t]
\begin{overpic}[width=0.48\textwidth]{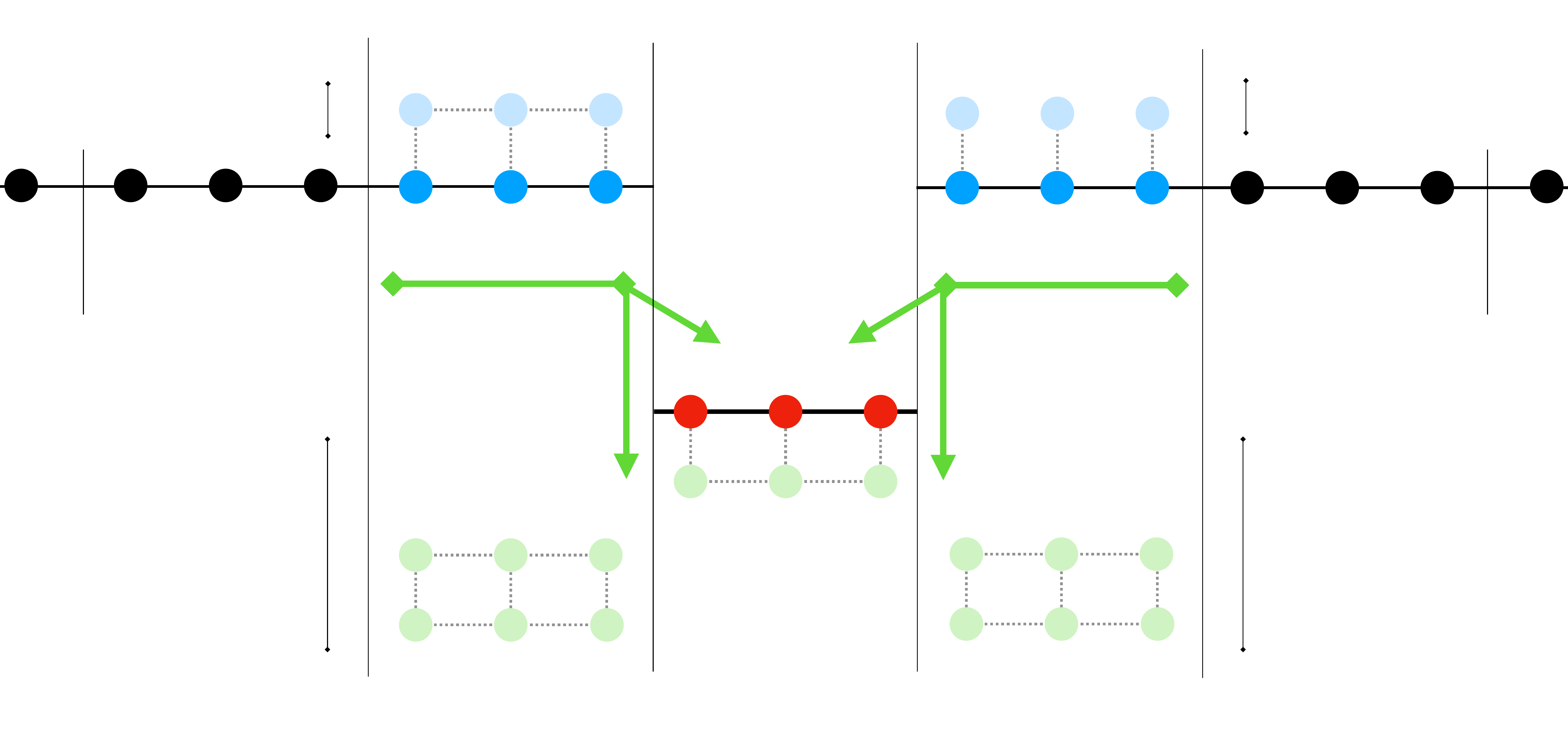}
\put(12.5,30.5){$A_i$}
\put(31,30.5){$B_i$}
\put(31,42.5){$\bar{B}_i$}
\put(31,2){$\hat{B}_i$}
\put(64,30.5){$A_{i+1}$}
\put(64,42.5){$\bar{A}_{i+1}$}
\put(64,2){$\hat{A}_{i+1}$}
\put(82.5,30.5){$B_{i+1}$}
\put(48.5,23){$C_i$}
\put(48.5,11){$\hat{C}_i$}
\put(35,20){$V^i$}
\put(61,20){$V^i$}
\put(81,40){\tiny purification}
\put(5.5,40){\tiny purification}
\put(81,12){\tiny Stinespring}
\put(5.5,12){\tiny Stinespring}
\end{overpic}
\caption{Proof of our main result (Thm.~\ref{thm:main}): starting from the purifications of the reduced states on $A_iB_i$ of the Gibbs state (blue region) we create the $C_i$ pieces together with their purifications (red region) from applying the Stinespring dilations $V^i$ of the local Markov channels $\Lambda^i$ from Lem.~\ref{lem:markov} (green region). We then get the desired convex combination of Matrix product states on the main lattice $A_iB_iC_i$ by tracing out all the purifying registers $\bar{A}_{i+1}\bar{B}_i$ and the Stinespring registers $\hat{B}_i\hat{A}_{i+1}\hat{C}_i$.}
\label{fig:main_result}
\end{figure}

%
\noindent\emph{Gibbs states and the METTS algorithm}---%
An algorithm to construct thermal ensembles of MPS is already available in the literature. It was introduced by White and goes by the name of the Minimally entangled typical thermal states (METTS) algorithm. Our theorem provides a theoretical justification for this algorithm.   Like in the MPO case, we can start from the trivial infinite temperature state and evolve it with imaginary time $\beta/2$ on both sides. However, instead of keeping the exact representation, we can represent the infinite temperature state by uniformly sampling over the basis of product states $\ket{\vec{i}} = \ket{i_1}\cdots\ket{i_n}$, and capture the action of the evolution $\exp(-\beta/2 H)$ on $\ket{\vec{i}}$ by promoting it to an MPS. We thus obtain the representation
\begin{align}\label{eq:metts}
 \frac{1}{Z} e^{-\beta H} =  \sum_{\vec{i}} p(\vec{i}) \ket{\phi_{T,\vec{i}}}\bra{\phi_{T,\vec{i}}},
\end{align}
with $p(\vec{i}) = Z^{-1}\Scp{\vec{i}}{\exp(-\beta H)\vec{i}}$ being the probability to sample the minimally entangled typical thermal state METTS $\ket{\phi_{T,\vec{i}}} = p(\vec{i})^{-1/2}\,\exp(-\beta H/2)\ket{\vec{i}}$. To accurately approximate the probability distribution $p(\vec{i})$, a Markov process was defined using standard time evolution algorithms for MPS, and as far as the steps are small and the entanglement growth is limited, this process satisfies an approximate version of detailed balance and typically leads to a good approximation of the equilibrium state. Note however that there is no guarantee that this algorithm converges to the true Gibbs state, as otherwise we would be able to simulate systems that are believed to be intractable (see, e.g., \cite{Aharonov2009}). Our theorem shows that a representation in terms of a mixture of MPS exists, not that it is easy to find this mixture.

%
\noindent\emph{Hydrodynamics and continuous time updates}---%
So far, we only considered the static case, that is, thermal equilibrium. But now consider a one-dimensional physical system which is first in thermal equilibrium and then, at time zero, subject to a quench such as a local spin flip at the origin, given by a unitary $U$. The system will eventually converge again into thermal equilibrium, albeit at a different temperature due to the injection of energy at time zero. Our theorem shows that at both ends of the time evolution, the state can be represented as a mixture of MPS.  It is hence natural to assume that the time dependent density matrix can at all times be represented as a mixture of matrix product states, as the "classical" entropy suppresses the quantum correlations. We are hence interested in developing an algorithm for doing time  evolution with mixtures of MPS.  


Recently, starting from~\cite{PhysRevLett.107.070601} Leviatan and co-workers~\cite{2017arXiv170208894L} suggested an algorithm for time-evolving quantum systems at infinite temperature. Their algorithm commences by sampling Matrix product states from a given distribution. In~\cite{2017arXiv170208894L}, mostly the uniform distribution was considered, but given our thoughts it is of course very plausible to start with the state given by the METTS algorithm as this is expected to approximate the thermal state before the spin flip. Next, we apply the local spin flip, which is easily implemented on each Matrix product state. Indeed, acting with $U$ on a MPS produces just another MPS. Hence acting with the local spin flip on the output of the METTS algorithm gives rise to another convex combination of MPS,
\begin{align}\label{eq:mettsspinflip}
\sum_{\vec{i}}\,q(\vec{i})\,\ket{\psi_{T,\vec{i}}}\bra{\psi_{T,\vec{i}}} &\to \sum_{\vec{i}}\,q(\vec{i})\,U^{\dagger}\ket{\psi_{T,\vec{i}}}\bra{\psi_{T,\vec{i}}} U \\
&= \sum_{\vec{i}}\,q(\vec{i})\,\ket{\psi^\prime_{T,\vec{i}}}\bra{\psi^\prime_{T,\vec{i}}},
\end{align}
with $\ket{\psi^\prime_{T,\vec{i}}} = U^{\dagger}\ket{\psi_{T,\vec{i}}}$. The system is then subject to the time evolution under the Hamiltonian $H$, $\ket{\psi^\prime_{T,\vec{i}}} \to \exp(-itH)\ket{\psi^\prime_{T,\vec{i}}}$. In general, this action will not produce an MPS, and the approximation would be even quite bad~\cite{PhysRevE.75.015202}. However, the key idea of~\cite{2017arXiv170208894L} is to ignore this difficulty and just project the state $\exp(-itH)\ket{\psi^\prime_{T,\vec{i}}}$ back onto the manifold $M_{MPS}^D$ using the method of TDVP applied to MPS~\cite{PhysRevLett.107.070601,PhysRevB.94.165116} such as to preserve all local constants of motion. This leads to an update rule which as a whole takes the initial convex combination of MPS  to a new mixture of MPS, which gives a completely new interpretation of their algorithm. Their algorithm can immediately be generalized as follows. 

\begin{widetext}
Considering an infinitesimal step of the evolution, we can transform the update rule to act directly on the probability distribution $\mu$, giving rise to a Fokker-Planck-like equation. Starting from the pure state evolution
$\mathrm{i}\frac{\partial \ket{\psi(t)}}{\partial t} = H \ket{\psi(t)}$,
we can systematically decompose the right hand side into terms acting parallel to the state, in the tangent space, in the double tangent space, and so on
\begin{align}
H \ket{\psi} = E \ket{\psi} + h^i \ket{\partial_i \psi} + h^{ij} \ket{\partial_i \partial_j \psi} + \ldots 
\end{align}
where the first three terms already give rise to an equality in case of a nearest neighbour Hamiltonian. Here, partial derivatives correspond to derivatives with respect to the variational parameters, i.e. every single entry of every single MPS tensor $A$. Furthermore, we have $E = \braket{\psi}{H|\psi}$, $h^i = g^{i,\bar{\jmath}} \braket{\partial_{\bar{\jmath}} \psi }{ H - E | \psi}$ where $g^{i,\bar{\jmath}}$ is the inverse of the metric $g_{\bar{\imath},j} = \braket{\partial_{\bar{\imath}}\psi}{\partial_j \psi}$ of the MPS manifold,
\begin{align}
h^{ij}= g^{ij,\bar{k}\bar{l}}  \left[\braket{\partial_{\bar{k}}\partial_{\bar{l}} \psi }{ H -E| \psi}-\braket{\partial_{\bar{k}}\psi}{H-E|\psi}\braket{\partial_{\bar{l}}\psi}{\psi}-\braket{\partial_{\bar{l}}\psi}{H-E|\psi}\braket{\partial_{\bar{k}}\psi}{\psi}\right]
\end{align}
with $g_{\bar{k}\bar{l},ij} = \braket{\partial_{\bar{k}}\partial_{\bar{l}} \psi }{ \partial_{i}\partial_{j} \psi}$, and so on. Partial integration in the convex combination of MPS leads to 
\begin{equation}
\begin{split}
	\frac{\partial \mu(\bar{A},A,t)}{\partial t} =& \mathrm{i} \left[\partial_i \big(h^i(\bar{A},A,t) \mu(\bar{A},A,t)\big)-\partial_{\bar{\imath}} \big(\overline{h}^{\bar{\imath}}(\bar{A},A,t) \mu(\bar{A},A,t)\big)\right]\\
	&-\mathrm{i} \left[\partial_i\partial_j \big(h^{i,j}(\bar{A},A,t) \mu(\bar{A},A,t)\big)-\partial_{\bar{\imath}}\partial_{\bar{\jmath}} \big(\overline{h}^{\bar{\imath},\bar{\jmath}}(\bar{A},A,t) \mu(\bar{A},A,t)\big)\right]
\end{split}
\end{equation}
The first order derivative term gives rise to a drift term of the probability distribution, and corresponds to the prediction of the TDVP (as applied in Ref.~\cite{2017arXiv170208894L}). In particular, this term will preserve a pure state. While the second derivative term might be interpreted as a diffusion term at first, one easily observes that the corresponding diffusion matrix is not positive definite. This is to be expected, as we would otherwise capture the exact evolution (for a nearest neighbour Hamiltonian) with our convex combination of MPS, even when starting from an initial pure state. In that case, the above equation will immediately give rise to negative values of the probability distribution $\mu$ and therefore to a sign problem when trying to sample. However, for initial distributions that are sufficiently broad, the above evolution might not destroy positivity right away and could yield an improvement over the pure drift case considered in Ref.~\cite{2017arXiv170208894L}.
\end{widetext}

%
\noindent\emph{Conclusions}---%
Using recently developed tools from quantum information theory, we proved that thermal states of local, one-dimensional Hamiltonians can be approximately written as a convex combination of Matrix product states. We employed this fact to reconsider two algorithms developed within condensed matter theory, and showed how our result provides a theoretical justification\,---\,much like the work of Hastings~\cite{Hastings_2007} does explain the success of DMRG. We believe that the description of thermal states in term of mixtures will provide the right framework for simulating quantum hydrodynamical effects. The main question left open by our work is if the bound dimension scaling can be improved from quasi-polynomial to polynomial in the system size $n$ and $\varepsilon^{-1}$. Note that this would be the case if we could obtain bounds on the conditional mutual information as strong as argued in~\cite{PhysRevB.94.155125,2016arXiv160907877B}. We emphasize that in contrast to Hasting's result for the MPS representation of ground states~\cite{Hastings_2007} our approximation of thermal states in Thm.~\ref{thm:main} does not assume the Hamiltonian to be gapped. Hence, correlations can be in general long-ranged and there is the possibility that Thm.~\ref{thm:main} can only be improved for, e.g., gapped Hamiltonians. Finally, it would be interesting to study extensions of our result to non-local Hamiltonians. The limitation of our current proof strategy is that we crucially employ a result of Araki~\cite{araki69}, which states that Gibbs states of one-dimensional local Hamiltonians have an exponential decay of correlations.

%
\noindent\emph{Acknowledgments}---%
We thank Isaac Kim for pointing out an error in an earlier version of this manuscript and Brian Swingle for discussions. JH acknowledges support from the ERC via Grant 715861 (ERQUAF). FV acknowledges funding from the ERC grant QUTE, and the sfb projects foqus and vicom. 


\begin{widetext}

\appendix

\section{Proof of Thm.~1}

Let $H=\sum_i h_i$ be a one-dimensional short-range Hamiltonian with $\|h_i\|_\infty\leq1$. For a temperature $T>0$, let
\begin{align}
\rho^{H,T}:=\frac{\exp\left(-H/T\right)}{\mathrm{Tr}\left[\exp\left(-H/T\right)\right]}
\end{align}
be the corresponding Gibbs state. We will make use of the following restatement of Lem.~2 from the main text -- which shows that Gibbs states of one-dimensional local Hamiltonians have an approximate local Markov chain structure~\cite[Thm.~1]{kato16}. 

\begin{lemma}[Local Markov chain structure of Gibbs states~\cite{kato16}]\label{lem:kato}
For every tripartite split of the lattice $\alpha\beta\gamma$, there exist a completely positive and trace preserving map $\Lambda_{\beta\to\beta\gamma}$ such that
\begin{align}
\left\|\rho^{H,T}_{\alpha\beta\gamma}-\left(\mathcal{I}_\alpha\otimes\Lambda_{\beta\to\beta\gamma}\right)\left(\rho^{H,T}_{\alpha\beta}\right)\right\|_1\leq\exp\left(-q(T)\sqrt{d(\alpha,\gamma)}\right)\,,
\end{align}
for any $d(\alpha,\gamma)\geq l_0$ and $q(T):=C\exp(-c/T)$ (where $0<l_0,C,c<100$ are universal constants and $d(\alpha,\gamma)$ quantifies the minimal distance in system size between $\alpha$ and $\gamma$).
\end{lemma}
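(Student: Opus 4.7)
The plan is to prove the lemma by unpacking the intuition sketched after Lem.~2 in the main text, adapting the quantum belief propagation technique of Hastings. First I would cut $\beta$ in half as $\beta = \beta_L \beta_R$ and write $H_{\alpha\beta\gamma} = H_{\alpha\beta_L} + H_{\beta_R\gamma} + H_{\mathrm{bdry}}$, where $H_{\mathrm{bdry}}$ collects the bounded-norm interaction terms crossing the cut. Applying the quantum belief propagation identity to peel off $H_{\mathrm{bdry}}$ yields the approximate decomposition displayed in Eq.~\eqref{eq:fernando_intuition}: an operator $P_{\beta,t}$ supported on a slab of width $t := d(\alpha,\gamma)/2$ around the cut, with $\|P_{\beta,t}\|_\infty \leq 2^{O(1/T)}$, such that $\exp(-H/T)$ factorises across $\beta_L|\beta_R$ up to conjugation by $P_{\beta,t}$, with additive error exponentially small in $t$.

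Next I would introduce the completely positive but only trace-non-increasing map $K_{\beta \to \beta\gamma}$ defined in the main text and verify, by direct substitution of the belief-propagation decomposition, that $(\mathcal{I}_\alpha \otimes K_{\beta \to \beta\gamma})(\rho^{H,T}_{\alpha\beta}) \approx \rho^{H,T}_{\alpha\beta\gamma}$ up to error $2^{-O(t)}$. Up to this point the argument is essentially algebraic and leverages the locality of $P_{\beta,t}$ together with the partial trace identity $\mathrm{Tr}_{\beta_R}[\rho^{H,T}_{\beta_R \gamma}] = \rho^{H,T}_\gamma$ (approximately, thanks again to belief propagation).

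The main obstacle is that $K_{\beta \to \beta\gamma}$ fails to be trace-preserving: its overall success probability is only $2^{-\Omega(1/T)}$, independent of system size. To convert it into a genuine quantum channel $\Lambda_{\beta \to \beta\gamma}$ I would deploy the repeat-until-success construction of~\cite{kato16}: attempt $K$ on successive disjoint slabs of $\beta$ of width $O(\sqrt{t})$, each time either declaring success and stopping, or on failure discarding that slab together with an adjacent shield region and retrying further inside $\beta$. Two ingredients make this work. First, Araki's theorem~\cite{araki69} guarantees exponential decay of correlations in $\rho^{H,T}$, so conditioning on a failure outcome in one slab only weakly perturbs the reduced state on the remaining slabs, with an error that can be absorbed into the shield width. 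Second, because each attempt succeeds with probability at least $2^{-O(1/T)}$ independently of $n$, the probability that all attempts fail is bounded by $(1-2^{-O(1/T)})^{\sqrt{t}} \leq \exp(-2^{-O(1/T)} \sqrt{t})$.

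The remaining quantitative work is a careful triangle-inequality combination of the three errors: the belief-propagation error $2^{-O(t)}$, the correlation-decay error incurred at each shield, and the probability of exhausting all attempts. I would balance the number and width of the slabs so that all three contributions are at most $\exp(-C\,e^{-c/T}\,\sqrt{d(\alpha,\gamma)})$, which is exactly the form claimed with $q(T) = C\exp(-c/T)$. The temperature-dependent prefactor $q(T)$ is forced by $\|P_{\beta,t}\|_\infty = 2^{O(1/T)}$, the $T$-dependence of Araki's correlation length, and the per-attempt success probability $2^{-O(1/T)}$. The hardest part is orchestrating the repeat-until-success procedure so that the conditional states remain close to the correct reduced Gibbs state across all iterations; once that is set up, the final recovery bound follows by a routine union bound and telescoping.
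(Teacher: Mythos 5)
Your plan follows essentially the same route as the paper, which in fact does not prove this lemma itself but imports it from~\cite{kato16} and sketches exactly the argument you describe: the quantum belief-propagation decomposition with the local operator $P_{\beta,t}$ of norm $2^{O(1/T)}$, the completely positive trace-non-increasing recovery map $K_{\beta\to\beta\gamma}$, and the repeat-until-success conversion to a genuine channel using Araki's exponential decay of correlations, giving the $\exp\left(-q(T)\sqrt{d(\alpha,\gamma)}\right)$ bound with $q(T)=C\exp(-c/T)$. At the level of detail you give, the outline is consistent with that argument; the technical work you defer (controlling the conditional states across iterations and assembling the failure branch into a trace-preserving map) is precisely what is carried out in~\cite{kato16}.
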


More precisely we will make use of an adapted, purified version of~\cite[Cor.~4]{kato16}. This original corollary says that Gibbs states of one dimensional local Hamiltonians can be well-approximated by a depth-two circuit with each gate acting locally on $O(\log^2(n))$ qubits. The following is a more precise restatement of Thm.~1 from the main text that we seek to prove.

\begin{theorem}
For a $n$-qubit system with fixed temperature $T>0$ we can write
\begin{align}
\left\|\rho^{H,T}-\sum_{j=1}^{J}p_j|\varphi^j\rangle\langle\varphi^j|\right\|_1\leq\varepsilon\,,
\end{align}
where $p_j$ denotes a probability distribution and all the $|\varphi^j\rangle$ are matrix product states with bond dimension
\begin{align}\label{eq:bond_dimension}
D=2^{O\left(\log^2\left(n/\varepsilon\right)\right)}\,.
\end{align}
Furthermore, $J$ is exponential in $n$ and the temperature dependence of the bond dimension is $\exp(\exp(O(1/T)))$.
\end{theorem}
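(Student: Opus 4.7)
The plan is to execute the three-step sketch of the main text as a quantitative proof at the level of the appendix. First, I would partition the $n$-qubit chain into $I$ consecutive triples of blocks $A_iB_iC_i$ of sizes $|A_i|=|B_i|=2^l$ and $|C_i|=2^{5l\xi}$, where $\xi=O(\exp(c/T))$ is the Araki correlation length for one-dimensional Gibbs states. For every pair $A_iB_i$ I would take a canonical purification $\ket{\rho^i}_{\bar{A}_iA_i\bar{B}_iB_i}$ of the reduced Gibbs state $\rho^{H,T}_{A_iB_i}$. For every gap $C_i$ I invoke Lemma~1 with the tripartition $\gamma_i=C_i$, $\beta_i=B_iA_{i+1}$, and $\alpha_i$ the complement (as in Fig.~\ref{fig:main_result}) to obtain a local recovery channel $\Lambda^i_{\beta_i\to\beta_i\gamma_i}$, and replace it by its minimal Stinespring isometry $V^i$ producing $C_i$ together with a Stinespring ancilla $\hat{B}_i\hat{A}_{i+1}\hat{C}_i$. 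The global pure state $\ket{\Psi_{D,\varepsilon}}:=\bigotimes_i V^i\ket{\rho^i}$ then admits a manifest MPS description in which every bipartite cut crosses $O(l\xi)$ physical plus auxiliary qubits, giving the bond-dimension bound $D\leq 2^{l(8+10\xi)}$; choosing $l=\log^2(n/\varepsilon)$ yields \eqref{eq:bond_dimension}.

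For step (ii) I would trace out every purifying register $\bar{A}_{i+1}\bar{B}_i$ and every Stinespring register $\hat{B}_i\hat{A}_{i+1}\hat{C}_i$ in the computational basis, expressing the resulting state as an ensemble of pure states indexed by the measurement outcomes on these ancillas. Monotonicity of the Schmidt rank under SLOCC (Vidal) then implies that every branch is still an MPS with bond dimension at most $D$, so the mixture has the required form $\sum_j p_j\ket{\varphi^j}\!\bra{\varphi^j}$ with $J$ bounded by the product of the auxiliary dimensions and hence exponential in $n$.

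The quantitative core is step (iii). I would build a telescoping sum that, one gap $C_i$ at a time, replaces the exact conditional structure of the Gibbs state on $C_i$ by the action of the recovery channel $\Lambda^i$, accumulating a per-step error controlled by Lemma~1. Summing over $I=O(n)$ gaps and using $l=\log^2(n/\varepsilon)$ forces the cumulative channel-replacement error to be at most $\varepsilon/2$. The remaining subtlety is that the construction manipulates purifications, whereas Lemma~1 only controls trace-norm errors of reduced states. To bridge this I would Uhlmann-lift each marginal bound to the corresponding purifications using the non-lockability / entanglement-of-formation inequality invoked in the main text as \cite[Lem.~20]{horodecki14}, which converts a trace-distance bound on a subsystem into a controlled bound on the joint pure state (with only a square-root loss that is absorbed by rescaling $l$ by a constant).

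The main obstacle I expect is precisely this passage through purifications in step (iii). Lemma~1 yields only a Kraus-operator description of the recovery, and its Stinespring dilation $V^i$ is determined only up to an isometry on the ancilla; when many such isometries are concatenated across neighbouring blocks, discrepancies that are invisible at the level of reduced states can amplify at the level of the global pure state $\ket{\Psi_{D,\varepsilon}}$. Controlling this uniformly in $I$ is what forces the square-root loss in the Markov bound and, together with the exponential $\xi$-dependence hidden in $q(T)^{-1}$, produces the doubly-exponential $T^{-1}$ dependence of the final bond dimension. The proof must therefore be arranged so that the non-lockability inequality is applied globally to the fully composed network, rather than once per block, to avoid an unacceptable polynomial-in-$I$ amplification of the error.
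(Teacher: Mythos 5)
Your steps (i) and (ii) follow the paper's construction, but your step (iii) goes off track in a way that matters. The subtlety you identify --- that ``Lemma~1 only controls trace-norm errors of reduced states'' and must be Uhlmann-lifted to the concatenated purifications, with the Stinespring ambiguity of the $V^i$ potentially amplifying across blocks --- is not an actual obstacle, and the fix you propose is not what \cite[Lem.~20]{horodecki14} does. The approximation statement concerns only the reduced state on the physical lattice $A_1B_1C_1\cdots A_IB_IC_I$, and after tracing out all purifying and Stinespring registers that reduced state is \emph{exactly} $\bigotimes_{i}\Lambda^i_{B_iA_{i+1}\to B_iA_{i+1}C_i}\bigl(\bigotimes_i\rho_{A_iB_i}\bigr)$, independently of which dilation isometries were chosen. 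No lifting of trace-norm bounds to the global pure state is needed, and the square root in the Markov bound $\exp(-q(T)\sqrt{d(\alpha,\gamma)})$ is a feature of the repeat-until-success construction in~\cite{kato16}, not a loss incurred in this proof. Likewise, the bound is applied once per block and the errors simply add over $I\leq n$ blocks; with $l=\log^2(n/\varepsilon)$ each per-step error is already $O(\varepsilon/n)$ (up to $T$-dependent constants absorbed in $l$), so no ``global'' application is required.

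What your telescoping argument is actually missing is the \emph{decoupling} error: the channels are fed the tensor product $\bigotimes_i\rho_{A_iB_i}$ of Gibbs marginals rather than the true correlated marginal of $\rho^{H,T}$ on $\bigcup_i A_iB_i$, and Lemma~1 says nothing about this discrepancy. This is precisely where \cite[Lem.~20]{horodecki14} enters in the paper: at each telescoping step one first replaces $\rho_{A_1B_1C_1\cdots A_{i-1}B_{i-1}A_iB_i}$ by $\rho_{A_1B_1C_1\cdots A_{i-1}B_{i-1}}\otimes\rho_{A_iB_i}$, bounding the trace-norm difference by the dimension factor $\bigl(|A_i||B_i|\bigr)^2=2^{4l}$ times the maximal correlation across the separating region $C_{i-1}$, which by Araki's exponential clustering is $2^{-\log|C_{i-1}|/\xi}=2^{-5l}$; the choice $|C_i|=2^{5l\xi}$ is made exactly so that this product is at most $\exp(-l)$. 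Only then does Lemma~1 contribute its $\exp(-q(T)\sqrt{l})$ recovery error, giving a per-step error $\exp(-q(T)\sqrt{l})+\exp(-l)$ and a total error $I\cdot O(\varepsilon/n)\leq O(\varepsilon)$. Without this decoupling step your error accounting is incomplete, and the role you assign to the non-lockability lemma (an Uhlmann-type conversion ``with a square-root loss'') is not a valid reading of it.
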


The full argument is shown here for maximal readability, where some steps similar to the proof of~\cite[Cor.~4]{kato16} are reproduced.

\begin{proof}
The proof is done in multiple steps:
\begin{enumerate}[(i)]
\item We construct one global matrix product state $\ket{\Psi_{D,\eps}}$ with bond dimension $D$ as in~\eqref{eq:bond_dimension}.
\item We show that this state $\ket{\Psi_{D,\eps}}$ is the purification of some state $\sum_{j=1}^{J}p_j|\varphi^j\rangle\langle\varphi^j|$ with all the $|\varphi^j\rangle$ matrix product states having bond dimension $D$ as in~\eqref{eq:bond_dimension}.
\item We show that this state $\sum_{j=1}^{J}p_j|\varphi^j\rangle\langle\varphi^j|$ is $\varepsilon$-close to the Gibbs state $\rho^{H,T}$ in trace distance.
\end{enumerate}

For (i) split the lattice into three consecutive regions $A_1B_1C_1A_2B_2C_2\cdots A_{I}B_{I}C_{I}$ of dimensions $|A_i|=|B_i|=2^l$ and $|C_i|=2^{l\cdot5\xi}$ with $\xi\leq\exp(c/T)$ the correlation length of the Gibbs state where $c$ is the constant from Lem.~\ref{lem:kato}~\footnote{Gibbs states of 1D local Hamiltonians have an exponential decay of correlations~\cite{araki69}.}. Set $\rho^{H,T}=:\rho_{A_1B_1C_1\cdots A_{I}B_{I}C_{I}}$. To construct the global state $\ket{\Psi_{D,\eps}}$ we first prepare a purification of all the marginals $\rho_{A_iB_i}$ of the Gibbs state: $|\zeta^i\rangle_{A_i\bar{A}_iB_i\bar{B}_i}$ with $|\bar{A}_i|=|A_i|$ and $|\bar{B}_i|=|B_i|$. Now the local Markov chain structure of Gibbs states (Lem.~\ref{lem:kato}) implies the existence of completely positive trace preserving maps $\Lambda^i_{B_iA_{i+1}\to B_iA_{i+1}C_i}$ such that~\footnote{For $\alpha_i=A_1B_1C_1A_{i-1}B_{i-1}C_{i-1}A_iB_{i+1}C_{i+1}\cdots A_{I}B_{I}C_{I}$, $\beta_i=B_iA_{1+1}$, and $\gamma_i=C_i$.}
\begin{align}
\left\|\Lambda^i_{B_iA_{i+1}\to B_iA_{i+1}C_i}\left(\rho_{A_1B_1C_1A_iB_i\cdots A_{I}B_{I}C_{I}}\right)-\rho_{A_1B_1C_1\cdots A_{I}B_{I}C_{I}}\right\|_1&\leq\exp\left(-q(T)\sqrt{\log\left|B_i\right|}\right)\\
&=\exp\left(-q(T)\sqrt{l}\right)\,.\label{eq:thm_inaction}
\end{align}
Denote Stinespring isometries of the $\Lambda^i_{B_iA_{i+1}\to B_iA_{i+1}C_i}$ by $V^i_{B_iA_{i+1}\to B_iA_{i+1}C_i\hat{D}_i}$, where we chose the dilation registers as $\hat{D}_i:=\hat{A}_i\hat{B}_{i+1}\hat{C}_i$ with $|\hat{A}_i|=|A_i|^2$, $|\hat{B}_{i+1}|=|B_{i+1}|^2$, and $|\hat{C}_i|=|C_i|$ (cf.~Fig.~2 from the main text)~\footnote{The dimension of the Stinespring dilation register is without loss of generality as small as input dimension times output dimension of the completely positive trace preserving map in question (see, e.g., \cite{kretschmann08}).}. We define
\begin{align}
\ket{\Psi_{D,\eps}}_{A_1\bar{A}_1B_1\bar{B}_1C_1\hat{D}_1\cdots A_{I}\bar{A}_{I}B_{I}\bar{B}_{I}C_{I}\hat{D}_{I}}:=\bigotimes_{i=1}^{I}V^i_{B_iA_{i+1}\to B_iA_{i+1}C_i\hat{D}_i}\left(\bigotimes_{i=1}^{I}|\zeta^i\rangle_{A_i\bar{A}_iB_i\bar{B}_i}\right)\,,
\end{align}
where the last $V^I_{B_I\to B_IC_I\hat{D}_I}$ is only acting on $B_I$. Since small Stinespring dilations imply small Schmidt-rank (Lem.~\ref{lem:schmidt}) this becomes a matrix product state with bond dimension~\footnote{In more detail, to apply Lem.~\ref{lem:schmidt} we need to extend the maps $\Lambda^i_{B_iA_{i+1}\to B_iA_{i+1}C_i}$ to maps $\Lambda^i_{B_iA_{i+1}C_i}$, e.g., by just first throwing the $C_i$-register away and then applying $\Lambda^i_{B_iA_{i+1}\to B_iA_{i+1}C_i}$. Using Lem.~\ref{lem:schmidt} with $\alpha_2=A_1\bar{A}_1B_1\bar{B}_1\cdots A_i\bar{A}_i\bar{B}_i$, $\alpha_1=B_iC_i$, $\beta_1=A_{i+1}$, and $\beta_2=\bar{A}_{i+1}B_{i+1}\bar{B}_{i+1}\cdots A_k\bar{A}_kB_k\bar{B}_k$ gives Eq.~\eqref{eq:dimension}.}
\begin{align}\label{eq:dimension}
\max_{i=1,\ldots,I}|A_i||B_i|\cdot|A_{i+1}||B_i|\cdot|A_{i+1}|^2|B_i|^2|C_i|^2\leq2^{l(8+10\xi)}\,.
\end{align}
Choosing $l=\log^2\left(n/\varepsilon\right)$ establishes step (i).

For (ii) we trace out the purifying systems $\bar{A}_1\bar{B}_1\cdots\bar{A}_{I}\bar{B}_{I}$ as well as the Stinespring dilation registers $\hat{D}_1\cdots\hat{D}_{I}$, leading to the state
\begin{align}\label{eq:close_state}
\bigotimes_{i=1}^{I}\Lambda^i_{B_iA_{i+1}\to B_iA_{i+1}C_i}\left(\bigotimes_{i=1}^{I}\rho_{A_iB_i}\right)=:\sum_{j=1}^{J}|\bar{\varphi}^j\rangle\langle\bar{\varphi}^j|_{A_1B_1C_1\cdots A_{I}B_{I}C_{I}}\,,
\end{align}
where each $|\bar{\varphi}^j\rangle\langle\bar{\varphi}^j|$ is generated by conditioning on a basis element of the traced out systems. We normalize to
\begin{align}
\text{$|\varphi^j\rangle\langle\varphi^j|:=|\bar{\varphi}^j\rangle\langle\bar{\varphi}^j|/p_j$ with $p_j:=\left\||\varphi^j\rangle\right\|_2^2$,}
\end{align}
and choose a local basis for the traced out systems $\bar{A}_1\bar{B}_1\hat{D}_1\cdots\bar{A}_{I}\bar{B}_{I}\hat{D}_{I}$ (see Fig.~2 from the main text). We can then use that the Schmidt-rank is monotone under local operations and post-selection -- so-called SLOCC monotonicity (see e.g., \cite{vidal99}) -- and hence conclude that all the $|\varphi^j\rangle$ are again matrix product states with bond dimension $D$ as in~\eqref{eq:bond_dimension}. $J$ becomes exponential in $n$.

For (iii) we need to show that the state in Eq.~\eqref{eq:close_state} is close to the Gibbs state $\rho_{A_1B_1C_1\cdots A_{I}B_{I}C_{I}}$ (the following part is a slightly simplified version of the proof of~\cite[Cor.~4]{kato16}). We first note that by the monotonicity of the trace distance under partial trace Eq.~\eqref{eq:thm_inaction} implies
\begin{align}\label{eq:thm_inaction2}
\left\|\Lambda^i_{B_iA_{i+1}\to B_iA_{i+1}C_i}\left(\rho_{A_1B_1C_1\cdots A_{i-1}B_{i-1}C_{i-1}A_iB_iA_{i+1}B_{i+1}}\right)-\rho_{A_1B_1C_1\cdots A_iB_iC_iA_{i+1}B_{i+1}}\right\|_1\leq\exp\left(-q(T)\sqrt{l}\right)\,.
\end{align}
We estimate
\begin{align}\label{eq:to_iterate}
&\left\|\bigotimes_{i=1}^{I}\Lambda^i_{B_iA_{i+1}\to B_iA_{i+1}C_i}\left(\bigotimes_{i=1}^{I}\rho_{A_iB_i}\right)-\rho_{A_1B_1C_1\cdots A_{I}B_{I}C_{I}}\right\|_1\notag\\
&\leq\left\|\Lambda^1_{B_1A_2\to B_1A_2C_1}\left(\rho_{A_1B_1}\otimes_{A_2B_2}\right)-\rho_{A_1B_1C_1A_2B_2}\right\|_1\notag\\
&\quad+\left\|\bigotimes_{i=2}^{I}\Lambda^i_{B_iA_{i+1}\to B_iA_{i+1}C_i}\left(\rho_{A_1B_1C_1A_2B_2}\bigotimes_{i=3}^{I}\rho_{A_iB_i}\right)-\rho_{A_1B_1C_1\cdots A_{I}B_{I}C_{I}}\right\|_1\,,
\end{align}
where we have used the triangle inequality for the trace distance and the monotonicity of the trace distance under partial trace (plus the sub-additivity with respect to tensor products). To bound the first term in Eq.~\eqref{eq:to_iterate} we use the estimate
\begin{align}
&\left\|\rho_{A_1B_1C_1\cdots A_{i-1}B_{i-1}A_iB_i}-\rho_{A_1B_1C_1\cdots A_{i-1}B_{i-1}}\otimes\rho_{A_iB_i}\right\|_1\notag\\
&\leq\Big(|A_i||B_i|\Big)^2\notag\\
&\quad\times\max_{\|X\|_\infty,\|Y\|_\infty\leq1}\left|\mathrm{Tr}\left[(\left(X_{A_1B_1C_1\cdots A_{i-1}B_{i-1}A_iB_i}\otimes Y_{A_iB_i}\right)\left(\rho_{A_1B_1C_1\cdots A_{i-1}B_{i-1}A_iB_i}-\rho_{A_1B_1C_1\cdots A_{i-1}B_{i-1}}\otimes\rho_{A_iB_i}\right)\right]\right|\\
&\leq2^{4l}\times2^{\frac{-\log\left|C_{i-1}\right|}{\xi}}\\
&\leq\exp(-l)\,,
\end{align}
where we used~\cite[Lem.~20]{horodecki14} for the first inequality and the exponential decay of correlations for the second inequality~\cite{araki69}. In particular, this then implies
\begin{align}
\left\|\Lambda^1_{B_1A_2\to B_1A_2C_1}\left(\rho_{A_1B_1}\otimes_{A_2B_2}\right)-\rho_{A_1B_1C_1A_2B_2}\right\|_1&\leq\left\|\Lambda^1_{B_1A_2\to B_1A_2C_1}\left(\rho_{A_1B_1A_2B_2}\right)-\rho_{A_1B_1C_1A_2B_2}\right\|_1+\exp(-l)\\
&\leq\exp\left(-q(T)\sqrt{l}\right)+\exp(-l)\,,
\end{align}
where we used Eq.~\eqref{eq:thm_inaction2} in the second step together with the monotonicity of the trace distance under partial trace. To estimate the second term in Eq.~\eqref{eq:to_iterate} we iterate the argument leading to
\begin{align}
\left\|\bigotimes_{i=1}^I\Lambda^i_{B_iA_{i+1}\to B_iA_{i+1}C_i}\left(\bigotimes_{i=1}^{I}\rho_{A_iB_i}\right)-\rho_{A_1B_1C_1\cdots A_IB_IC_I}\right\|_1&\leq I\cdot\left(\exp\left(-q(T)\sqrt{l}\right)+\exp(-l)\right)\\
&\leq I\cdot O\left(\varepsilon/n\right)\,,
\end{align}
since $l=\log^2\left(n/\varepsilon\right)$ as chosen before. We have $I\leq n$ and hence the claim follows.
\end{proof}


\section{Miscellaneous Lemmas}

\begin{lemma}[Small Stinespring dilations imply small Schmidt-rank]\label{lem:schmidt}
Let be $|\varphi\rangle_{\alpha_2\alpha_1\beta_1\beta_2}:=|\varphi\rangle_{\alpha_2\alpha_1}\otimes|\varphi\rangle_{\beta_1\beta_2}$ with Schmidt-rank upper bounded by $d$ for any cut through $\alpha_2\alpha_1$ or through $\beta_1\beta_2$. Moreover, let $\Lambda_{\alpha_1\beta_1}$ be a completely positive and trace preserving map with Kraus decomposition $\{K^i_{\alpha_1\beta_1}\}_{i=1}^{I}$. Then, there exists a Stinespring dilation isometry $V_{\alpha_1\beta_1\to\alpha_1\gamma\beta_1}$ of $\Lambda_{\alpha_1\beta_1}$ such that the Schmidt-rank of
\begin{align}
\ket{\Psi_{D,\eps}}_{\alpha_2\alpha_1\gamma\beta_1\beta_2}:=\left(1_{\alpha_2}\otimes V_{\alpha_1\beta_1\to\alpha_1\gamma\beta_1}\otimes1_{\beta_2}\right)|\varphi\rangle_{\alpha_2\alpha_1\beta_1\beta_2}
\end{align}
in any bipartite cut through $\alpha_2\alpha_1\gamma\beta_1\beta_2$ is upper bounded by $d|\alpha_1||\beta_1|\cdot I$.
\end{lemma}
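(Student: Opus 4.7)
The plan is to take the natural Kraus-indexed Stinespring dilation and then expand the resulting state in a fixed product basis, after which the Schmidt rank across each of the four bipartite cuts of the ordered chain $\alpha_2|\alpha_1|\gamma|\beta_1|\beta_2$ follows by index counting. Concretely, I would choose $|\gamma|=I$ and set
\begin{equation*}
V_{\alpha_1\beta_1\to\alpha_1\gamma\beta_1}:=\sum_{i=1}^{I} K^{i}_{\alpha_1\beta_1}\otimes\ket{i}_\gamma,
\end{equation*}
which is an isometry by $\sum_i(K^i)^{\cal y}K^i=\identity$ and reproduces $\Lambda_{\alpha_1\beta_1}$ upon tracing out $\gamma$, so it is a valid (in fact minimal) Stinespring dilation.

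Introducing the Schmidt decompositions $\ket{\varphi}_{\alpha_2\alpha_1}=\sum_{s=1}^{d}\lambda_s\ket{a_s}_{\alpha_2}\ket{b_s}_{\alpha_1}$ and $\ket{\varphi}_{\beta_1\beta_2}=\sum_{t=1}^{d}\mu_t\ket{c_t}_{\beta_1}\ket{d_t}_{\beta_2}$, and expanding $K^i\ket{b_s}_{\alpha_1}\ket{c_t}_{\beta_1}=\sum_{p,q}\phi^{s,t,i}_{p,q}\ket{p}_{\alpha_1}\ket{q}_{\beta_1}$ in fixed product bases, gives the explicit form
\begin{equation*}
\ket{\Psi_{D,\eps}}=\sum_{s,t,i,p,q}\lambda_s\mu_t\phi^{s,t,i}_{p,q}\,\ket{a_s}_{\alpha_2}\ket{p}_{\alpha_1}\ket{i}_\gamma\ket{q}_{\beta_1}\ket{d_t}_{\beta_2}.
\end{equation*}
From this product-basis representation the Schmidt rank across each of the four cuts of the chain can be read off directly. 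For the two outer cuts $\alpha_2|\alpha_1\gamma\beta_1\beta_2$ and $\alpha_2\alpha_1\gamma\beta_1|\beta_2$ only the Schmidt label $s$ or $t$ varies on the external side, giving rank at most $d$. For the two middle cuts $\alpha_2\alpha_1|\gamma\beta_1\beta_2$ and $\alpha_2\alpha_1\gamma|\beta_1\beta_2$ the external basis vectors are indexed by $(s,p)$ and by $(q,t)$ respectively, giving rank at most $d|\alpha_1|$ and $d|\beta_1|$. Each of these four bounds is majorised by $d\cdot|\alpha_1||\beta_1|\cdot I$, which is the uniform estimate claimed.

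There is no substantive obstacle; the lemma is essentially linear-algebra bookkeeping and the stated bound is deliberately loose, the factor $I$ being slack that absorbs the environment dimension so that a single expression works uniformly across all cuts. The only point meriting care is that the argument does exploit this particular minimal dilation with $|\gamma|=I$: using a wildly larger environment would inflate the Schmidt rank across the two middle cuts, even though it would not affect the recovered channel $\Lambda_{\alpha_1\beta_1}$ itself.
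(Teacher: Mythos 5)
Your choice of dilation $V=\sum_i K^i\otimes\ket{i}_\gamma$ is exactly the one used in the paper, and your index counting is correct for the cuts you consider. The gap is that you only treat the four cuts sitting \emph{between} the labelled registers of the chain $\alpha_2|\alpha_1|\gamma|\beta_1|\beta_2$, whereas the lemma claims the bound for \emph{any} bipartite cut through $\alpha_2\alpha_1\gamma\beta_1\beta_2$ --- in particular for cuts lying strictly inside $\alpha_2$, $\alpha_1$, $\gamma$, $\beta_1$ or $\beta_2$. These regions are multi-site blocks of the lattice (this is precisely why the hypothesis is phrased as ``Schmidt rank at most $d$ for any cut \emph{through} $\alpha_2\alpha_1$'' rather than just across the $\alpha_2|\alpha_1$ interface, and indeed your argument only ever uses the two interface Schmidt decompositions), and in the application to Thm.~1 the bond dimension of $\ket{\Psi_{D,\eps}}$ is the maximal Schmidt rank over \emph{every} nearest-neighbour cut of the chain, e.g.\ cuts inside $\alpha_1=B_iC_i$ or inside the large blocks $\alpha_2,\beta_2$ comprising the rest of the lattice. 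The paper's proof devotes its cases (i), (iii), (iv) and (v) exactly to these interior cuts; case (iv), a cut $\alpha_1=\alpha_1^a\alpha_1^b$, is the one where the Schmidt decomposition of $\ket{\varphi}_{\alpha_2\alpha_1}$ across the interior cut must be combined with an operator Schmidt decomposition of the Kraus operators, and it is the reason the stated bound carries the product $d|\alpha_1||\beta_1|I$.

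The omission is repairable within your framework, but the repair has to be said. For cuts inside $\alpha_1$, $\gamma$ or $\beta_1$, choose the fixed bases $\ket{p}_{\alpha_1}$, $\ket{i}_\gamma$, $\ket{q}_{\beta_1}$ in your expansion to be product bases across the cut; then the same counting gives, e.g., rank at most $d|\alpha_1^a|$ for the cut $\alpha_2\alpha_1^a|\alpha_1^b\gamma\beta_1\beta_2$, which is within the claimed bound (and in fact sharper than the paper's estimate there). For cuts inside $\alpha_2$ (and symmetrically $\beta_2$) the expansion alone does not help, because $\ket{a_s}_{\alpha_2}$ can have large Schmidt rank across an interior cut of $\alpha_2$; there you must invoke the hypothesis that the \emph{input} state has rank at most $d$ across any cut through $\alpha_2\alpha_1$, together with the fact that $V$ acts entirely on the other side of such a cut and hence cannot increase the Schmidt rank --- this is the paper's case (i). As a minor aside, your closing remark is off: replacing $\gamma$ by a larger environment changes the dilation only by an isometry on the environment side, which leaves the Schmidt ranks across the two middle cuts unchanged; the relevant point is rather that the chosen basis of $\gamma$ should be product across any interior cut of $\gamma$.
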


\begin{proof}
The total system is given by $\alpha_2\alpha_1\gamma\beta_1\beta_2$ and we treat the different possible bipartite cuts separately:
\begin{enumerate}[(i)]
\item For any cut through $\alpha_2$ as well as for the cut $\alpha_2|\alpha_1\gamma\beta_1\beta_2$ the Schmidt-rank is upper bounded by $\log d$ since $V_{\alpha_1\beta_1\to\alpha_1\gamma\beta_1}$ does not act on $\alpha_2\beta_2$.

\item For the cut $\alpha_2\alpha_1|\gamma\beta_1\beta_2$ we choose
\begin{align}\label{eq:kraus}
V_{\alpha_1\beta_1\to\alpha_1\gamma\beta_1}(\cdot)=\sum_{i=1}^{I}K^i_{\alpha_1\beta_1}(\cdot)\otimes|i\rangle_\gamma\,.
\end{align}
Taking the operator Schmidt decomposition for each $K^i_{\alpha_1\beta_1}$ in the cut $\alpha_1|\beta_1$ gives
\begin{align}
K^i_{\alpha_1\beta_1}=\sum_{k_i=1}^{|\alpha_1||\beta_1|}K_{\alpha_1}^{k_i}\otimes K_{\beta_1}^{k_i}
\end{align}
and we find for the output state
\begin{align}\label{eq:gamma}
\ket{\Psi_{D,\eps}}_{\alpha_2\alpha_1|\gamma\beta_1\beta_2}=\sum_{i=1}^{I}\sum_{k_i=1}^{|\alpha_1||\beta_1|}\underbrace{\left(1_{\alpha_2}\otimes K_{\alpha_1}^{k_i}\right)|\varphi\rangle_{\alpha_2\alpha_1}}_{=:|\varphi^{ik_i}\rangle_{\alpha_2\alpha_1}}\otimes\underbrace{|i\rangle_\gamma\otimes\left(K_{\beta_1}^{k_i}\otimes1_{\beta_2}\right)|\varphi\rangle_{\beta_1\beta_2}}_{=:|\varphi^{ik_i}\rangle_{\gamma\beta_1\beta_2}}\,.
\end{align}

\item Cuts through $\gamma$ are already treated by Eq.~\eqref{eq:gamma}.

\item For any cut $\alpha_1=\alpha_1^a\alpha_1^b$ we take the Schmidt decomposition
\begin{align}
|\varphi\rangle_{\alpha_2\alpha_1}=\sum_{j=1}^d|\varphi^j\rangle_{\alpha_2\alpha_1^a}\otimes|\varphi^j\rangle_{\alpha_1^b}\,,
\end{align}
as well as the following operator Schmidt-decomposition of the Kraus operators in Eq.~\eqref{eq:kraus}:
\begin{align}
K^i_{\alpha_1\beta_1}=\sum_{k_i=1}^{|\alpha_1| |\beta_1|}K_{\alpha_1^a}^{k_i}\otimes K_{\alpha_1^b\beta_1}^{k_i}\,.
\end{align}
Hence, we find for the output state
\begin{align}
\ket{\Psi_{D,\eps}}_{\alpha_2\alpha_1^a|\alpha_1^b\gamma\beta_1\beta_2}=\sum_{i=1}^{I}\sum_{k_i=1}^{|\alpha_1||\beta_1|}\sum_{j=1}^d\underbrace{\Big(1_{\alpha_2}\otimes K_{\alpha_1^a}^{k_i}\Big)|\varphi^j\rangle_{\alpha_2\alpha_1^a}}_{=:|\varphi^{jik_i}\rangle_{\alpha_2\alpha_1^a}}\otimes\underbrace{\Big(K_{\alpha_1^b\beta_1}^{k_i}\otimes1_{\gamma\beta_2}\Big)\Big(|\varphi^j\rangle_{\alpha_1^b}\otimes|i\rangle_\gamma\otimes|\varphi\rangle_{\beta_1\beta_2}\Big)}_{=:|\varphi^{jik_i}\rangle_{\alpha_1^b\gamma\beta_1\beta_2}}\,.
\end{align}

\item Cuts through $\beta_2$, the cut $\alpha_2\alpha_1\gamma\beta_1|\beta_2$, as well as cuts of the form $\beta_1=\beta_1^a\beta_2^b$ follow by symmetry.
\end{enumerate}
We conclude the claim since we have treated all cuts.
\end{proof}

\end{widetext}

\bibliography{bibliography}


\end{document}